\newtheorem{theorem}{Theorem}
\newtheorem{lemma}{Lemma}
\newtheorem{definition}{Definition}
\title{Minimum Energy Analysis for Robust Gaussian Joint Source-Channel Coding with a Square-Law Profile
}
\author{Mohammadamin Baniasadi, and Ertem Tuncel\\
        Department of Electrical and Computer Engineering \\ University of California, Riverside, CA \\ Email: mohammadamin.baniasadi@email.ucr.edu, ertem.tuncel@ucr.edu\\
}
\begin{document}
\maketitle
\thispagestyle{empty}
\pagestyle{empty}

\begin{abstract}

A distortion-noise profile is a function indicating the maximum allowed source distortion value for each noise level in the channel. In this paper, the minimum energy required to achieve a distortion noise profile is studied for Gaussian sources which are transmitted robustly over Gaussian channels. We provide improved lower and upper bounds for the minimum energy behavior of the square-law profile using a family of lower bounds and our proposed coding scheme.

\textit{Index Terms}--Distortion-noise profile, fidelity-quality profile, energy-distortion tradeoff, energy-limited transmission, joint source-channel coding.
\end{abstract}

\section{INTRODUCTION}

Most of emerging wireless applications, such as Internet of things (IoT) and multimedia streaming require lossy transmission of source signals over noisy channels, which is in general a joint source-channel coding (JSSC) problem. Shannon proved the separation theorem which states that in point-to-point scenarios, it is optimal to separate source and channel coding problems. However, in many problems, the optimality of separation breaks down, since JSCC can exploit source correlation to generate correlated channel inputs despite the distributed nature of the encoders, potentially improving the overall performance \cite{c1}-\cite{c5}.

We consider lossy transmission of a Gaussian source over an additive white Gaussian noise (AWGN) channel, where the channel input constraint is not on power and bandwidth, but on energy per source symbol. This approach has drawn much attention recently, see e.g., \cite{c6,c7,c8,c9} as a few  references. Part of the appeal is the simplifications to both achievable schemes and converses as the bandwidth expansion factor approaches infinity [8]. We assume there is no feedback.

It is well-known (for example, see \cite{c6}) that the minimum distortion that can be achieved with energy $E$ when the channel noise variance $N$ is fixed, is given by
\begin{align} \label{D1}
D=\exp(-\frac{E}{N}).
\end{align}
In this paper, a robust setting is considered in which the transmitter does not know $N$, while it is known at the receiver, and it can have any value in the interval (0,$\infty$). The system is to be designed to fulfill with a distortion-noise profile ${\cal D}(N)$ so that it achieves
\begin{align}
D\leq \mathcal{D}(N)\nonumber
\end{align}
for all $0<N<\infty$, while minimizing its energy use. This wide spectrum of noise variances is taken into consideration to account for the scenarios in which absolutely nothing is known about the noise level. For instance, the channel could be suffering occasional interferences of unknown power ($N>0$), although it may be originally of very high quality ($N \approx 0$). There are a wide range of applications in which noise variances are not known. For instance, we can point military situation, indoor fires and emergency conditions. 

In \cite{c10}, it is shown that for the inversely linear profile, uncoded transmission is optimal. Furthermore,it is represented that exponential profiles are not achievable with finite energy. Then, the square-law profile is studied which is somehow combination of linear and exponential profiles and lower and upper bounds have been derived for the minimum achievable energy of the square-law profile. In this paper, we derive improved lower and upper bounds for the minimum energy, and show that the gap between our lower and upper bounds is significantly reduced compared to \cite{c10}. Improving lower and upper bounds and making them as tight as possible helps us to design better systems in practical scenarios by comparing the amount of energy  with these improved theoretical bounds. 

A similar universal coding scenario in the literature is given in \cite{c11}, where a maximum regret approach for compound channels is proposed. The objective in their problem is to minimize the maximum ratio of the capacity to the achieved rate at any noise level. There are other related works including \cite{c12,c13}, and \cite{c14}.

The rest of the paper is organized as follows. The next section is devoted to notation and preliminaries. In Section~III, previous work on lower and upper bounds for the minimum energy is reviewed. In Section~IV, we present our main results, which are improved lower and upper bounds for the square-law profile. Finally, in Section~V we conclude our work and discuss future work.

\section{Notation and Preliminaries}
Suppose that $X^n$ is an i.i.d unit-variance Gaussian source which is transmitted over an AWGN channel $V^m=U^m+W^m$, where $U^m$ is the channel input, $W^m\sim \mathcal{N} (\mathbf{0},N\mathbf{I}_m)$ is the noise, and $V^m$ is the observation at the receiver. We define bandwidth expansion factor $\kappa=\frac{m}{n}$ which can be arbitrarily large, while the energy per source symbol is limited by 
\begin{align}
\frac{1}{n}E(||U^m||^2)\leq E.
\end{align}
The achieved distortion per source symbol is measured as 
\begin{align}
D=\frac{1}{n}E(||X^n-\hat{X}^n||^2)
\end{align}
while $\hat{X}^n$ is the reconstruction at the receiver.

\begin{definition} A pair of distortion-noise profile $\mathcal{D}(N)$ and energy level $E$ is said to be \textit{achievable} if for every $\epsilon>0$, there exists large enough $(m,n)$, an encoder
\begin{align}
f^{m,n}: R^n \to R^m,\nonumber
\end{align}
and decoders
\begin{align}
g_N^{m,n}: R^m 	\to R^n\nonumber
\end{align}
for every $0<N<\infty $, such that
\begin{align}
\frac{1}{n}E\{||f^{m,n}(X^n)||^2\}\leq E+\epsilon\nonumber
\end{align}
and
\begin{align}
\frac{1}{n}E\{||X^n-g_N^{m,n}(f^{m,n}(X^n)+W_N^m)||^2\}\leq \mathcal{D}(N)+\epsilon\nonumber
\end{align}
for all $N$, with $W_N^m$ being the i.i.d. channel noise with variance $N$.
\end{definition}

For given $\mathcal{D}$, the main quantity of interest would be
\begin{align}
E_{min}(\mathcal{D})=\inf \{E:(\mathcal{D},E) \ \textrm{achievable}\}\nonumber
\end{align}
with the understanding that $E_{min}(\mathcal{D})=\infty $ if there is no finite $E$ for which $(\mathcal{D},E)$ is achievable.

In the sequel, it will prove more convenient to use the notation $F=\frac{1}{D}$ and $Q=\frac{1}{N}$, where $F$ and $Q$ standing for signal \textit{fidelity} and channel \textit{quality}, respectively as in \cite{c10}. For any $\mathcal{D}(N)$, we define the corresponding \textit{fidelity-quality profile} as
\begin{align}
\mathcal{F}(Q)=\frac{1}{\mathcal{D}(\frac{1}{Q})}\nonumber
\end{align}
and state that $(\mathcal{F},E)$ is achievable if and only if $(\mathcal{D},E)$ is achievable according to \textit{Definition 1}. $E_{min}(\mathcal{F})$ is similarly defined.
\vspace{3mm}
\section{Previous Work}
\subsection{A Family of Lower Bounds on $E_{min}(\mathcal{D})$}
In \cite{c10}, the authors used the connection between the problem and lossy transmission of Gaussian sources over Gaussian broadcast channels where the power per channel symbol is limited and the bandwidth expansion factor $\kappa$ is fixed. More specifically, they employed the converse result by Tian \textit{et al.} \cite{c15}, which is a generalization of the 2-receiver outer bound shown by Reznic \textit{et al.} \cite{c16} to $K$ receivers, and proved the following lemma.

\begin{lemma}
\label{Lemma1}
For any $K$, $\tau_1 \ge \tau_2 \ge ...\ge \tau_{K-1} \ge \tau_{K}=0$, and $N_1 \ge N_2 \ge ... \ge N_{K}\ge N_{K+1}=0$,
\begin{eqnarray} 
\label{eqtn:lemma1}
E_{min}(\mathcal{D})\!\!\!&\ge & \!\!\!N_1 \log \frac{1+\tau_1}{\mathcal{D}(N_1)+\tau_1}\nonumber\\
\!\!\!&& \!\!\!+\sum_{k=2}^{K} N_k \log \frac{(1+\tau_k)(\mathcal{D}(N_k)+\tau_{k-1})}{(1+\tau_{k-1})(\mathcal{D}(N_k)+\tau_{k})}.
\end{eqnarray}
\end{lemma}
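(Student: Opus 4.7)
The plan is to exploit the reduction from the robust JSCC problem to lossy transmission of a Gaussian source over a $K$-receiver degraded Gaussian broadcast channel, and then invoke the converse of Tian \emph{et al.}\ \cite{c15} in the infinite-bandwidth limit. As a first step, for any ordered noise levels $N_1 \ge N_2 \ge \cdots \ge N_K \ge 0$, an achievable pair $(\mathcal{D},E)$ in the robust sense yields a valid code for the $K$-receiver physically degraded Gaussian broadcast problem with source $X^n$, per-symbol power $P = E/\kappa$, bandwidth expansion $\kappa$, and distortion target $\mathcal{D}(N_k)$ at receiver $k$ (whose noise variance is $N_k$). The degradation is obtained by writing the noise at the $k$-th receiver as the sum of the $(k+1)$-th noise plus an independent Gaussian increment, which is possible precisely because the noise variances are ordered.

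Second, I would apply the Tian \emph{et al.}\ converse to this broadcast setup. In the quadratic Gaussian setting their bound supplies a lower bound on $\kappa P = E$ in terms of the $K$ distortion targets and an auxiliary nonnegative sequence; after reparameterizing this sequence as $\tau_1 \ge \tau_2 \ge \cdots \ge \tau_K = 0$, the bound takes the form of a sum of Gaussian capacity terms $\tfrac{\kappa}{2}\log(1+\cdot)$ weighted appropriately by the $N_k$. The final step is to let $\kappa \to \infty$ with $E$ fixed. Using the identity $\kappa \log(1+x/\kappa)\to x$, each capacity term collapses to an energy-to-noise ratio; after telescoping across consecutive indices, the surviving expression matches \eqref{eqtn:lemma1} exactly, with the $k=1$ term $N_1 \log\frac{1+\tau_1}{\mathcal{D}(N_1)+\tau_1}$ arising as a boundary contribution and the remaining $N_k \log(\cdot)$ terms arising as pairwise differences between adjacent capacity bounds. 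This is the same $\kappa \to \infty$ mechanism by which \eqref{D1} follows from the point-to-point Shannon bound $\tfrac{\kappa}{2}\log(1+P/N) \ge \tfrac{1}{2}\log(1/D)$, so the boundary case $K=1$, $\tau_1 = 0$ provides a useful sanity check.

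The main obstacle is bridging the notation and algebra between \cite{c15} and the statement of Lemma~\ref{Lemma1}: identifying the correct reparameterization that converts Tian \emph{et al.}'s auxiliary variables into the $\tau_k$'s used here, and then verifying that the telescoping of the $\tfrac{\kappa}{2}\log(1+\cdot)$ terms under $\kappa\to\infty$ produces precisely the weighted log-ratios $N_k\log\frac{(1+\tau_k)(\mathcal{D}(N_k)+\tau_{k-1})}{(1+\tau_{k-1})(\mathcal{D}(N_k)+\tau_k)}$ in the claimed form. Once this algebraic identification is in hand, the remaining steps, namely the broadcast reduction and the infinite-bandwidth limit, follow from now-standard arguments for energy-limited Gaussian JSCC and the monotonicity of the $\tau_k$ and $N_k$ sequences.
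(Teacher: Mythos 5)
Your proposal follows exactly the route the paper attributes to \cite{c10}: reduce the robust problem to lossy transmission over a $K$-receiver degraded Gaussian broadcast channel with power $E/\kappa$ and distortion target $\mathcal{D}(N_k)$ at receiver $k$, apply the converse of Tian \emph{et al.}\ \cite{c15}, and let $\kappa\to\infty$ so that each $\frac{\kappa}{2}\log(1+\cdot)$ term collapses to an energy-to-noise ratio. The paper itself states Lemma~1 as prior work and gives no further proof details, so your sketch is consistent with, and essentially identical in approach to, the paper's treatment.
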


\subsection{Square-Law Fidelity Quality Profiles}
In \cite{c10}, the authors focused on $\mathcal{F}(Q)=1+\alpha Q^2$ for some $\alpha >0$ and analyzed the lower and upper bounds for $E_{min}{(\mathcal{F})}$.
\subsubsection{Lower Bound for $E_{min}(\mathcal{F})$}
Invoking \textit{Lemma 1} by properly choosing $\tau_k$ and $N_k$ in (\ref{eqtn:lemma1}), the following theorem was obtained in \cite{c10}.

\begin{theorem}
For a fidelity-quality profile $\mathcal{F}(Q)=1+\alpha Q^2$, the minimum required energy is lower-bounded as
\begin{align}
E_{min}(\mathcal{F}) \ge c \ \sqrt[]{\alpha}\nonumber
\end{align}
with
\begin{align}
c=\sum_{k=1}^{\infty}\frac{1}{\sqrt[]{4^k\exp(k)-1}}\approx 0.4507. \nonumber
\end{align}
\end{theorem}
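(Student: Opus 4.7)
The plan is to apply Lemma~\ref{Lemma1} with a tailored choice of $\{\tau_k, N_k\}_{k=1}^{K}$ and then send $K\to\infty$. First I would rewrite the square-law fidelity-quality profile in its distortion-noise form
\begin{align}
\mathcal{D}(N) = \frac{1}{1+\alpha/N^2} = \frac{N^2}{N^2+\alpha},\nonumber
\end{align}
and exploit the scale invariance of the problem: setting $N_k = \sqrt{\alpha}\,\nu_k$ yields $\mathcal{D}(N_k) = \nu_k^2/(1+\nu_k^2)$, which is independent of $\alpha$, so the right-hand side of (\ref{eqtn:lemma1}) factors as $\sqrt{\alpha}$ times a dimensionless function of $\{\nu_k,\tau_k\}$.

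The central design step is to choose $\nu_k$ and $\tau_k$ so that each summand in (\ref{eqtn:lemma1}) contributes a term of the form $\sqrt{\alpha}/\sqrt{4^k e^k-1}$. A clean ansatz is $\nu_k = 1/\sqrt{4^k e^k-1}$, which produces the remarkably neat value $\mathcal{D}(N_k) = 1/(4^k e^k)$. I would then define $\tau_k$ by the recursion that forces the $k$-th logarithm in (\ref{eqtn:lemma1}) to equal $1=\log e$: for $k=1$ this amounts to $\tau_1=(1-e\mathcal{D}(N_1))/(e-1)$, and for $k\ge 2$ the condition is
\begin{align}
(1+\tau_k)(\mathcal{D}(N_k)+\tau_{k-1}) = e\,(1+\tau_{k-1})(\mathcal{D}(N_k)+\tau_k),\nonumber
\end{align}
which is linear in $\tau_k$ and hence explicitly solvable. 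Under these choices, each of the first $K-1$ summands in (\ref{eqtn:lemma1}) contributes exactly $N_k = \sqrt{\alpha}/\sqrt{4^k e^k-1}$.

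The $K$-th summand requires separate treatment because Lemma~\ref{Lemma1} forces $\tau_K=0$, which conflicts with the recursive value. I would simply set $\tau_K=0$ and observe that the resulting $K$-th term remains nonnegative: the log ratio becomes $(\mathcal{D}(N_K)+\tau_{K-1})/[\mathcal{D}(N_K)(1+\tau_{K-1})]$, which is at least $1$ because $\tau_{K-1}(1-\mathcal{D}(N_K))\ge 0$. Hence the bound from Lemma~\ref{Lemma1} is at least $\sqrt{\alpha}\sum_{k=1}^{K-1}1/\sqrt{4^k e^k -1}$, and sending $K\to\infty$ yields the desired $c\sqrt{\alpha}$.

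The main obstacle is verifying the monotonicity hypotheses $N_1\ge N_2\ge\cdots$ and $\tau_1\ge\tau_2\ge\cdots\ge 0$ of Lemma~\ref{Lemma1}. The first is immediate from the closed-form $\nu_k=1/\sqrt{4^k e^k-1}$. For the second, I would analyze the recursion asymptotically: when $\mathcal{D}(N_k)\to 0$ it linearizes to $\tau_k\approx\tau_{k-1}/e$, giving geometric decay of $\tau_k$ towards zero, and this behavior combined with direct verification at small $k$ (where $\tau_1=3/(4(e-1))$ and $\tau_2$ is computed similarly from the recursion) confirms $\tau_k\ge 0$ and strict monotonicity throughout, completing the proof.
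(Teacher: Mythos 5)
Your proposal is correct and follows exactly the route the paper indicates: the paper states only that Theorem~1 is ``obtained by invoking Lemma~\ref{Lemma1} with properly chosen $\tau_k$ and $N_k$'' (citing \cite{c10}), and your explicit choices $N_k=\sqrt{\alpha}/\sqrt{4^k e^k-1}$ (so that $\mathcal{D}(N_k)=(4e)^{-k}$) together with the recursion forcing each logarithm to equal $1$ reproduce precisely the stated constant $c=\sum_{k\ge 1}(4^k e^k-1)^{-1/2}$, with the $\tau_K=0$ boundary term correctly discarded as nonnegative. The only soft spot is that the nonnegativity and monotonicity of the $\tau_k$ sequence is argued asymptotically rather than by a full induction, but the claim is true (the threshold $\mathcal{D}(N_k)(e-1)/(1-e\mathcal{D}(N_k))$ decays like $(4e)^{-k}$ while $\tau_k$ decays only like $e^{-k}$), so this is a matter of polish rather than a gap.
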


\subsubsection{Upper Bound for $E_{min}(\mathcal{F})$}
Using a scheme first sending the source uncoded, and leveraging the received output as side information for the subsequent digital rounds sending indices of an infinite-layer quantizer, an upper bound for the minimum energy was presented in the following theorem in \cite{c10}.

\begin{theorem}
The minimum required energy for profile $\mathcal{F}(Q)=1+\alpha Q^2$ is upper-bounded as
\begin{align}
E_{min}(\mathcal{F})\leq d \ \sqrt[]{\alpha}\nonumber
\end{align}
with 
\begin{align}
d=2 \ \sqrt[]{\log3-Li_2(-2)}\approx 3.1846 \nonumber
\end{align}
where $Li_2(.)$ is the polylogarithm of order 2 defined as 
\begin{align}
Li_2(z)=-\int_{0}^{1} \frac{\log(1-zu)}{u}du.\nonumber
\end{align}
\end{theorem}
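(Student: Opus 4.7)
My approach is to construct an explicit hybrid analog--digital scheme whose total per-source-symbol energy meets the claimed bound, and to invoke known Wyner--Ziv/successive-refinement achievability to analyze it.

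The scheme devotes energy $e_0$ per source symbol to uncoded transmission of $X^n$, and uses the remaining (unbounded) bandwidth to carry an infinite-layer successive-refinement digital code. I would index the digital layers by a noise threshold $N'>0$ and let $e(N')\,dN'$ denote the per-source-symbol energy of the layer at $N'$. In the limit $\kappa\to\infty$, such a layer contributes incremental rate $\frac{e(N')}{2N'}\,dN'$ nats per source symbol that is decodable whenever the channel noise is at most $N'$.

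At noise level $N$, the decoder first forms the MMSE estimate from the uncoded observation -- yielding side information of MSE $D_0(N)=N/(e_0+N)$ -- and then decodes every refinement layer with $N'\ge N$. By the successive refinability of the Gaussian source under Gaussian Wyner--Ziv side information, the residual distortion satisfies
\[
D(N)=\frac{N}{e_0+N}\exp\left(-\int_N^{\infty}\frac{e(N')}{N'}\,dN'\right).
\]
I would set $N^{\star}:=\alpha/e_0$, the threshold above which the uncoded piece alone meets the profile, take $e(N')=0$ on $[N^\star,\infty)$, and on $(0,N^\star)$ enforce $D(N)=1/(1+\alpha/N^2)$ with equality. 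Differentiating in $N$ then produces a closed form for $e(N)$ as a rational function of $N$, $e_0$ and $\alpha$, and substitution collapses the total energy to $E(e_0)=e_0+\int_0^{N^\star}e(N)\,dN$.

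The last step is to evaluate this integral and optimize over $e_0$. The $\arctan$ part comes from $\int dN/(N^2+\alpha)$; after a rescaling of the integration variable to the unit interval, the remaining log part reduces to $\int_0^1\log(1+2u)/u\,du=-Li_2(-2)$ by the dilogarithm identity stated in the theorem, while the boundary log term at $N=N^\star$ under the optimizing $e_0$ supplies $\log 3$. Combining all contributions gives $E\le 2\sqrt{\alpha(\log 3-Li_2(-2))}=d\sqrt{\alpha}$. I expect the chief technical obstacle to be the bookkeeping in this last step: the integrals naively produce several $\log$, $\arctan$, and boundary pieces, and it is only after carefully collecting them (and verifying the stationary condition $\frac{dE}{de_0}=0$) that the compact closed form $\log 3-Li_2(-2)$ emerges under the square root.
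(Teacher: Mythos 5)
Your overall architecture (one uncoded transmission of energy $e_0$, an infinite stack of Wyner--Ziv refinement layers, the threshold $N^{\star}=\alpha/e_0$ above which the analog part alone meets the profile, and a final optimization of the form $e_0+S\alpha/e_0\to 2\sqrt{S\alpha}$) is indeed the scheme behind this theorem. The genuine gap is the step you lean on most heavily: the formula $D(N)=\frac{N}{e_0+N}\exp\bigl(-\int_N^\infty \frac{e(N')}{N'}\,dN'\bigr)$ is not what layered Wyner--Ziv coding achieves. Each layer's binning rate is fixed by the conditional mutual information evaluated at that layer's own decoding threshold $N_k$, where the analog side information has quality $Q_k=1/N_k$; when the actual noise $N$ is smaller, the side information is better, and the multiplicative distortion reduction that layer delivers relative to the analog-only baseline at $N$ is strictly smaller than $e^{-2R_k}$. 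Concretely, with quantizer parameters $\beta_k$, layer $k$'s reduction factor at quality $Q>Q_k$ is $\frac{\beta_{k-1}+e_0Q}{\beta_k+e_0Q}>\frac{\beta_{k-1}+e_0Q_k}{\beta_k+e_0Q_k}=e^{-2R_k}$, so $D(N)>D_0(N)\exp(-2\sum_{k:N_k\ge N}R_k)$. Successive refinability of the Gaussian Wyner--Ziv problem (which holds for a \emph{fixed} side information) does not give you exponent-stacking against a side information whose quality varies with the decoding threshold. Tellingly, if you push your formula through, the minimization gives $E\le 2\arctan(\sqrt{e-1})\sqrt{\alpha}\approx 1.84\sqrt{\alpha}$ --- an $\arctan$-type constant below even the paper's improved bound of $2.32\sqrt{\alpha}$, which signals that the distortion formula is claiming more than the scheme delivers.

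Relatedly, the constants $\log 3$ and $-Li_2(-2)$ do not arise from the continuum integrals you set up. In the actual argument (the $d=0$, $A_0=E_0$ specialization of the paper's Section~IV machinery), the layers sit at the discrete points $Q_k=kE_0/\alpha$ with $\beta_k=1+\frac{E_0^2}{\alpha}k(k+1)$, the per-layer digital energies satisfy $B_k=\frac{1}{Q_k}\log\bigl(1+\frac{2kE_0^2/\alpha}{1+k^2E_0^2/\alpha}\bigr)\le\frac{\alpha}{kE_0}\log\bigl(1+\frac{2}{k}\bigr)$, and the total digital energy is bounded by $\frac{\alpha}{E_0}\sum_{k\ge 1}\frac{1}{k}\log\bigl(1+\frac{2}{k}\bigr)$, whose $k=1$ term contributes $\log 3$ and whose tail is bounded by $\int_1^\infty\frac{\log(1+2/x)}{x}\,dx=-Li_2(-2)$ by an integral comparison. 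Optimizing $E_0+\frac{\alpha}{E_0}(\log 3-Li_2(-2))$ over $E_0$ then yields $d=2\sqrt{\log 3-Li_2(-2)}$. So your sketch names the right constants but attributes them to the wrong computation, and the exact per-layer rate evaluation (carried out in the paper via the covariance determinants and the matrix determinant lemma) is precisely the content that your exponential formula replaces without justification.
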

\vspace{5mm}
\section{Our Main Results}
Our main contributions in this paper are tighter lower and upper bounds to the energy for the profile ${\cal F}(Q)=1+\alpha Q^2$.
\subsection{Lower Bound for $E_{min}(\mathcal{F})$ }
We begin with lower bounding $E_{\min}({\cal F})$  by the following theorem.
\begin{theorem}
For a fidelity-quality profile ${\cal F}(Q) = 1 + \alpha Q^2$, the minimum required energy is lower-bounded as 
\[
E_{\min}({\cal F})\geq 0.9057\sqrt{\alpha}. 
\]
\end{theorem}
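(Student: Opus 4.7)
The plan is to apply Lemma~\ref{Lemma1} with a significantly finer choice of $(K,\tau_k,N_k)$ than the geometric-style schedule used in Theorem~1. First I would substitute the square-law profile $\mathcal{D}(N)=N^2/(N^2+\alpha)$ into the right-hand side of (\ref{eqtn:lemma1}) and use the rescaling $N_k=\sqrt{\alpha}\,\tilde{N}_k$ to check that the entire bound factors as $\sqrt{\alpha}$ times a dimensionless expression depending only on $(\tilde{N}_k,\tau_k)$. It therefore suffices to exhibit admissible sequences in the case $\alpha=1$ whose contribution exceeds $0.9057$.

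To extract the sharpest bound from the family in Lemma~\ref{Lemma1}, I would send $K\to\infty$ and regard $(\tau_k,N_k)$ as samples of continuous functions at points of a refining partition. Writing $\Delta\tau_k=\tau_{k-1}-\tau_k$ and expanding each logarithm in the sum to first order in $\Delta\tau_k$ converts that sum into the Riemann integral
\begin{equation*}
\int_{0}^{\tau_{\max}} \frac{N(\tau)\,\bigl(1-\mathcal{D}(N(\tau))\bigr)}{\bigl(\mathcal{D}(N(\tau))+\tau\bigr)(1+\tau)}\,d\tau,
\end{equation*}
where $N(\tau)$ is the monotone interpolant of the $N_k$. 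The second-order corrections of the Taylor expansion sum to $O(1/K)$ and disappear in the limit; the isolated term $N_1\log\bigl((1+\tau_1)/(\mathcal{D}(N_1)+\tau_1)\bigr)$ becomes a boundary contribution that can be made to vanish by choosing $\tau_1\to\infty$ together with $N_1$ appropriately.

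Because the integrand depends on $N$ but not on its derivatives, the maximization over admissible $N(\tau)$ reduces to a pointwise problem in $\tau$. Differentiating the integrand with respect to $N$ at fixed $\tau$ yields an explicit optimizer $N^{*}(\tau)$, and I expect $N^{*}$ to be monotonically increasing in $\tau$ so that the admissibility constraint $N_{k-1}\geq N_k$ is automatically satisfied. Substituting $N^{*}$ back into the integrand and applying a substitution such as $u=\sqrt{\tau}$ or $\tau=\tan^{2}\theta$ should collapse the bound to a standard elementary integral whose value is at least $0.9057$.

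The main obstacle I anticipate is the rigorous justification of the continuous-limit argument as an honest \emph{lower} bound on $E_{\min}$, rather than merely as the supremum over the Lemma~\ref{Lemma1} family: one must show that for every $\epsilon>0$ there exists a finite $K$ and an admissible discrete choice whose bound lies within $\epsilon$ of the limiting integral, while simultaneously keeping the boundary term and the integrand's mild singularity near $\tau=0$ under control.
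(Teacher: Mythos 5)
Your route is sound and genuinely different from the paper's. The paper stays inside the same family of bounds (Lemma~\ref{Lemma1}) but stops at $K=2$ with $\tau_2=0$: after the rescaling $q_i=\sqrt{\alpha}Q_i$ it runs gradient ascent over the three scalars $(q_1,q_2,\tau)$ and reads off the constant $0.9057$ from the numerical optimum $q_1^*=1.5496$, $q_2^*=5.6679$, $\tau^*=0.1285$; no continuum limit is taken. Your $K\to\infty$ version works and in fact proves more. Two remarks that close the gaps you flag. First, no Taylor expansion or error control is needed: writing each log-ratio in (\ref{eqtn:lemma1}) as $\int_{\tau_k}^{\tau_{k-1}}\bigl(\tfrac{1}{\mathcal{D}(N_k)+t}-\tfrac{1}{1+t}\bigr)\,dt$ shows that the sum over $k$ is \emph{exactly} your integral with $N(\cdot)$ piecewise constant, so every finite $K$ already gives an honest lower bound and you only need continuity of the integral in $N(\cdot)$ (the optimal integrand is integrable at $\tau=0$) to pass to the pointwise optimizer; the isolated first term is nonnegative because $\mathcal{D}\leq 1$ and may simply be dropped rather than delicately killed. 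Second, carrying the computation through for $\alpha=1$: with $\mathcal{D}(N)=N^2/(N^2+1)$ the integrand reduces to $N/\bigl((N^2(1+\tau)+\tau)(1+\tau)\bigr)$, which is maximized at $N^*(\tau)=\sqrt{\tau/(1+\tau)}$ (increasing in $\tau$, hence admissible), giving $\int_0^\infty \tfrac{1}{2\sqrt{\tau}\,(1+\tau)^{3/2}}\,d\tau=1$ via $\tau=u^2$. So your scheme yields $E_{\min}(\mathcal{F})\geq\sqrt{\alpha}$, which implies the stated $0.9057\sqrt{\alpha}$ and is strictly stronger than the paper's result; what the paper's approach buys is brevity (a three-variable numerical search), while yours buys a closed-form and tighter constant, at the cost of having to make the discretization argument explicit.
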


\begin{proof}

A lower bound on $E_{min}(\mathcal{D})$ follows from  (\ref{D1}). Since for any fixed $N_0$ and $D_0$ the expended energy cannot be lower than $N_0\log\frac{1}{D_0}$, the lower bound is obtained given by
\begin{align}
E_{min}(\mathcal{D})\ge \sup_{N>0} N \log \frac{1}{\mathcal{D}(N)}
\end{align}
or equivalently by
\begin{align}
\label{lower1}
E_{min}(\mathcal{F})&\ge \sup_{Q>0} \frac{\log \mathcal{F}(Q)}{Q}\nonumber\\
&=\sup_{Q>0} \frac{\log(1+\alpha Q^2)}{Q}\nonumber\\
&=\bigg(\sup_{q>0} \frac{\log(1+ q^2)}{q}\bigg)\sqrt{\alpha}
\end{align}
where $q= \sqrt {\alpha} Q$.
By solving (\ref{lower1}) numerically, optimal value of $q$ is $q^*=2.01$ and thus 
\begin{align}
\label{lbound}
E_{min}(\mathcal{F})\ge 0.8047\sqrt\alpha.
\end{align}
Note that (\ref{lower1}) is the special case of \textit{Lemma 1} where $K=1$. Thus, it is reasonable to expect an even better lower bound by increasing $K$. By setting $K=2$ and $\tau_1 = \tau \ge \tau_2=0$, the lower bound is achieved as
{\scriptsize\begin{align}
\label{lower2}
E_{min}(\mathcal{F})&\ge \sup_{Q_2>Q_1>0,\tau>0} \Bigg[\frac{\log (1+\frac{\alpha Q^2_1}{1+\tau(1+\alpha Q^2_1)})}{Q_1}+ \frac{\log (1+\frac{\alpha \tau Q^2_2}{1+\tau})}{Q_2} \Bigg]\nonumber\\
&\ge \Bigg(\sup_{q_2>q_1>0,\tau>0} \bigg[\frac{\log (1+\frac{q^2_1}{1+\tau(1+q^2_1)})}{q_1}+ \frac{\log (1+\frac{ \tau q^2_2}{1+\tau})}{q_2} \bigg]\Bigg)\sqrt{\alpha}
\end{align}}
where $q_1=\sqrt{\alpha}Q_1$ and $q_2=\sqrt{\alpha}Q_2$, respectively.

In order to compute the supremum in (\ref{lower2}), we use the gradient ascent algorithm. As the initial point, we set $q_1=2.01$ and $\tau=0$ which give us the same lower bound (\ref{lbound}) for any arbitrary choice of $q_2$. Starting from this initial point (together with the arbitrary choice $q_2=3$), the algorithm converged to $q^*_1=1.5496$, $q^*_2=5.6679$, $\tau^*=0.1285$, and the corresponding lower bound is achieved as
\begin{align}
\label{tightlowerbound}
    E_{min}(\mathcal{F})\ge 0.9057\sqrt\alpha.
\end{align}
\end{proof}
Comparing \textit{Theorem 1} with \textit{Theorem 3} shows that the lower bound is tightened significantly.
%
\subsection{Upper Bound for $E_{min}(\mathcal{F})$}
To upper bound $E_{min}(\mathcal{F})$, we introduce a $K$-layer coding scheme which has a $K$-layer quantizer and sends the quantization indices using Wyner-Ziv coding, where the $k$th quantization index is to be decoded whenever  $N \leq N_k$ for some predetermined $N_1 \geq N_2 \geq \ldots \geq N_K$.
However, instead of relying on only one uncoded transmission of the source $X^n$ as the generator of the side information at the receiver, we also send quantization errors uncoded after each layer of quantization. In other words, we have $K$ layers of uncoded transmission while in  \cite{c10} the authors only had the uncoded transmission in first layer. 

It is not immediately obvious that this strategy will reduce the total expended energy, because even though the energy needed to convey quantization indices will be reduced because of a richer set of available side information, transmission of the quantization errors themselves consumes additional energy. However, as we show here, the minimum energy needed is indeed reduced compared to the scheme in~\cite{c10}.  
\begin{center}
\begin{table*}[t]
\label{table1}
\centering
\caption { \small Utilization of information in our proposed coding scheme}\label{tab1}
\hspace{0.5 cm}
\begin{tabular}{ |l|c|c|c|c|c| }
\hline
Noise interval  & $N>N_1$ &  $N_1\geq N>N_2$   &  $N_2\geq N>N_3$  &  $...$ & $N_{K}\geq N>N_{K+1}$ \\ \hline
\multirow{7}{*}{Decoded digital information}
 & & & & & \\
 & $-$ & $\hat{S}^n_1$& $\hat{S}^n_1$ &  &$\hat{S}^n_1$\\
 &  &  & $\hat{S}^n_2$& & .\\
 & & & & $...$& .\\
 &  & &  & & .\\ 
 &  & &  & & $\hat{S}^n_{K}$\\ 
 & & & & & \\
 \hline
\multirow{9}{*}{Effective side information}
& & & & & \\
 &$\sqrt{A_0} S^n_0+W^n_{0,N}$ & $\sqrt{A_0} S^n_1+W^n_{0,N}$& $\sqrt{A_0} S^n_2+W^n_{0,N}$ &  &$\sqrt{A_0} S^n_{K}+W^n_{0,N}$\\
 &  &$\sqrt{A_1} S^n_1+W^n_{1,N}$  & $\sqrt{A_1} S^n_2+W^n_{1,N}$ & & .\\
 & & &$\sqrt{A_2} S^n_2+W^n_{2,N}$ & & .\\
 &  & &  & $...$& .\\ 
 &  & &  & & .\\
  &  & &  & & .\\
  &  & &  & & $\sqrt{A_{K}} S^n_{K}+W^n_{K,N}$\\ 
  & & & & & \\
 \hline
\end{tabular}
\end{table*}
\end{center}

The source $X^n$ is successively quantized into source codewords $\hat{S}^n_k$ for $k=1,\ldots,K$, where the underlying single-letter characterization satisfies
\begin{align}
S_k=\hat{S}_{k+1}+S_{k+1}\nonumber
\end{align}
with $S_0=X$ and $\hat{S}_{k+1}\perp S_{k+1}$. 
Each $S_k^n$ for $k=0,1,\ldots,K$ is then sent in an uncoded fashion, i.e., as $\sqrt{A_k}S_k^n$.
For any noise variance $0<N<\infty$ , the received signals will then be given by
\[
Y^n_{i,N}=\sqrt{A_i} S_i^n +W^n_{i,N}
\] 
for $i=0,\ldots,K$.
When $N>N_1$, the $X^n$ will be estimated only by utilizing $Y^n_{0,N}$. On the other hand, when $N_{k+1}<N\leq N_k$, for $k=1,2,\ldots,K$, since the first $k$ layers of quantization indices will already be decoded, the estimation can rely on all 
\begin{align}
\tilde{Y}^n_{i,N}=\sqrt{A_i} S^n_k+W^n_{i,N}\nonumber
\end{align}
as \textit{effective} side information, as all $\hat{S}^n_i$ for $i=1,...,k$ can be subtracted from $X^n$.
The utilization of information in our coding scheme is summarized in TABLE I.

Now, to be able to decode $\hat{S_{k}}$ whenever $N \leq N_k$, it suffices to use a binning rate of 
\begin{eqnarray} 
\label{rk}
R_k &=& I(S_{k-1};\hat{S}_k|\tilde{Y}_{0,N_k},\tilde{Y}_{1,N_k},...,\tilde{Y}_{k-1,N_k})\nonumber\\
&=& I(S_{k-1};\hat{S}_k)-I(\tilde{Y}_{0,N_k},\tilde{Y}_{1,N_k},...,\tilde{Y}_{k-1,N_k};\hat{S}_k)\nonumber\\
&=& h(S_{k-1})-h(S_{k})\nonumber\\
&&-h(\tilde{Y}_{0,N_k},\tilde{Y}_{1,N_k},...,\tilde{Y}_{k-1,N_k})\nonumber\\
&&+h(\tilde{Y}_{0,N_k},\tilde{Y}_{1,N_k},...,\tilde{Y}_{k-1,N_k}|\hat{S}_k)\nonumber\\
&=&\frac{1}{2}\log\frac{\sigma^2_{S_{k-1}}}{\sigma^2_{S_{k}}}-\frac{1}{2}\log\frac{\det\mathbf{\Sigma}_{\mathbf{Y}_k}}{\det\mathbf{\Sigma}_{\mathbf{Y}_k|\hat{S}_k}}
\end{eqnarray}
where 
\[
\mathbf{\Sigma}_{\mathbf{Y}_k} = \mathbf{A}_k\mathbf{\Sigma}_{\mathbf{Z}_k}\mathbf{A}_k^T
\]
with 
 \begin{align}
\mathbf{A}_k=
  \begin{bmatrix}
    \sqrt[]{A_0} & 1 & 0 & 0 & . & . & . & 0  \\
     \sqrt[]{A_1} & 0 & 1 & 0 & . & . & . & 0  \\
    . & . & . & 1 & . & . & . & 0  \\
    . & . & . & . & 1 & . & . & 0  \\
    . & . & . & . & . & 1 & . & 0  \\
    \sqrt[]{A_{k-1}} & 0 & 0 & 0 & . & . & . & 1  \\
    \end{bmatrix}\nonumber
\end{align}
and 
\begin{align}
\mathbf{Z}_k=
  \begin{bmatrix}
    S_{k-1} \\
    W_{0,N_k} \\
    W_{1,N_k}\\
 \vdots\\
     W_{k-1,N_k}
  \end{bmatrix} \; .\nonumber
\end{align}  
Similarly, 
\[
\mathbf{\Sigma}_{\mathbf{Y}_k|\hat{S}_k} = \mathbf{A}_k\mathbf{\Sigma}_{\mathbf{\tilde{Z}}_k}\mathbf{A}_k^T
\]
with 
\begin{align}
  \mathbf{\tilde{Z}}_k=
  \begin{bmatrix}
    S_{k} \\
    W_{0,N_k} \\
    W_{1,N_k}\\
    \vdots \\
         W_{k-1,N_k}
  \end{bmatrix} \; .\nonumber
\end{align}
Since the source and channel noise are independent, both $\mathbf{\Sigma}_{\mathbf{Z}_k}$ and $\mathbf{\Sigma}_{\mathbf{\tilde{Z}}_k}$ are diagonal, and that makes the computation of $\mathbf{\Sigma}_{\mathbf{Y}_k}$ and $\mathbf{\Sigma}_{\mathbf{Y}_k|\hat{S}_k} $ easy.
Specifically, defining the $k\times k$ matrix
\[
\mathbf{G}_k=
  \begin{bmatrix}
    1 & 0 & . & . & . & 0\\
     0 & 0 & 0 & . & .  & 0  \\
    . & 0 & 0 & 0 & . & 0  \\
    . & . & 0 & . & 0 & 0    \\
    . & . & . & . & 0 & 0    \\
    0 & 0 & 0 & 0 & 0 & 0   \\
    \end{bmatrix} ,
\]
one can write 
\[
\mathbf{\Sigma}_{\mathbf{Z}_k}= N_k \mathbf{I}_{k}+(\sigma^2_{S_{k-1}}-N_k)\mathbf{G}_k \; .
\]
and 
\[
\mathbf{\Sigma}_{\mathbf{\tilde{Z}}_k}= N_k \mathbf{I}_{k}+(\sigma^2_{S_k}-N_k)\mathbf{G}_k \; .
\]
We then have
\begin{align}
\mathbf{\Sigma}_{\mathbf{Y}_k}&=\mathbf{A}_k\bigg(N_k \mathbf{I}_{k}+(\sigma^2_{S_{k-1}}-N_k)\mathbf{G}_k\bigg)\mathbf{A}_k^T\nonumber\\&=N_k\mathbf{A}_k\mathbf{I}_{k}\mathbf{A}_k^T+(\sigma^2_{S_{k-1}}-N_k)\mathbf{A}_k\mathbf{G}_k\mathbf{A}_k^T\nonumber\\
&=N_k\mathbf{A}_k\mathbf{A}_k^T+(\sigma^2_{S_{k-1}}-N_k) \mathbf{a}_k  \mathbf{a}_k^T\nonumber \\
&=N_k( \mathbf{a}_k  \mathbf{a}_k^T+\mathbf{I}_k)+(\sigma^2_{S_{k-1}}-N_k)  \mathbf{a}_k  \mathbf{a}_k^T \nonumber \\
 \label{covy}
&=N_k \mathbf{I}_k+\sigma^2_{S_{k-1}} \mathbf{a}_k  \mathbf{a}_k^T \; 
\end{align}
where $\mathbf{a}_k$ is the first column of matrix $\mathbf{A}_k$. Similarly,
\begin{align} \label{covys}
\mathbf{\Sigma}_{\mathbf{Y}_k|\hat{S}_k}=N_k\mathbf{I}_k+\sigma^2_{S_k} \mathbf{a}_k  \mathbf{a}_k^T.
\end{align}
By substituting (\ref{covy}) and (\ref{covys}) in (\ref{rk}), we then get
\begin{align} \label{Rdet}
R_k=\frac{1}{2}\log\frac{\sigma^2_{S_{k-1}}}{\sigma^2_{S_{k}}}-\frac{1}{2}\log\frac{\det(N_{k}\mathbf{I}_{k}+\sigma^2_{S_{k-1}}\mathbf{a}_k \mathbf{a}_k^T)}{\det(N_{k}\mathbf{I}_{k}+\sigma^2_{S_{k}}\mathbf{a}_k  \mathbf{a}_k^T)}.
\end{align}
Using the Matrix Determinant Lemma \cite{c17}, which states for arbitrary invertible $\mathbf{M}$ and column vectors $\mathbf{u}$ and $\mathbf{v}$ that 
\[
\det(\mathbf{M}+\mathbf{u}\mathbf{v}^T) = \det(\mathbf{M}) \cdot (1+\mathbf{v}^T\mathbf{M}^{-1}\mathbf{u})
\]
We can write 
 \begin{align}
R_k&=\frac{1}{2}\log\frac{\sigma^2_{S_{k-1}}}{\sigma^2_{S_{k}}}-\frac{1}{2}\log\frac{(1+\frac{\sigma^2_{S_{k-1}}}{N_k}\mathbf{a}_k^T\mathbf{a}_k)}{(1+\frac{\sigma^2_{S_{k}}}{N_k}\mathbf{a}_k^T\mathbf{a}_k)}\nonumber\\
&=\frac{1}{2}\log\frac{\sigma^2_{S_{k-1}}(1+\frac{\sigma^2_{S_{k}}}{N_k}\mathbf{a}_k^T\mathbf{a}_k)}{\sigma^2_{S_k}(1+\frac{\sigma^2_{S_{k-1}}}{N_k}\mathbf{a}_k^T\mathbf{a}_k)} \nonumber \\
&=\frac{1}{2}\log\frac{\beta_k+Q_kA_{k,\rm{total}}}{\beta_{k-1}+Q_kA_{k,\rm{total}}}  
\end{align}
where $\beta_k=\frac{1}{\sigma^2_{S_{k}}}$, $Q_k=\frac{1}{N_k}$, and 
\[
A_{k,\rm{total}}\stackrel{\Delta}{=}A_0+A_1+...+A_{k-1} \; .
\]

For the digital message, we use the channel with infinite bandwidth and energy $B_k$. Therefore, the rate must not exceed the channel capacity under the noise level $N_k$, i.e., 
\[
\frac{1}{2}\log\frac{\beta_k+Q_kA_{k,\rm{total}}}{\beta_{k-1}+Q_kA_{k,\rm{total}}} \leq C_k=\frac{B_kQ_k}{2}
\]
or equivalently, 
\[
\frac{\beta_k+Q_kA_{k,\rm{total}}}{\beta_{k-1}+Q_kA_{k,\rm{total}}} \leq \exp(B_kQ_k) \; .
\]

When $N_{k+1}<N\leq N_k$, or equivalently when $Q_k\leq Q< Q_{k+1}$, the MMSE estimation boils down to estimating $\tilde {S}_k^n$ using all the available effective side information, that is 
\[
\tilde {S}_k^n=\sum_{i=0}^{k}\lambda_i\tilde{Y}_{i,N}
\]
with appropriate $\lambda_i$ for $i=0,1,\ldots,k$.
Thus, the resultant distortion can be calculated with the help of the Sherman-Morrison-Woodbury identity \cite{c17} as 
\begin{align} \label{D}
D&=\sigma^2_{S_k}-(\sigma^2_{S_k})^2\mathbf{a}_{k+1}^T\left(N \mathbf{I}_{k+1}+\sigma^2_{S_k} \mathbf{a}_{k+1}  \mathbf{a}_{k+1}^T\right)^{-1}\mathbf{a}_{k+1} \nonumber \\
&=\sigma^2_{S_k}-(\sigma^2_{S_k})^2\mathbf{a}_{k+1}^T\left[Q\mathbf{I}_{k+1} - \frac{Q^2 \mathbf{a}_{k+1}\mathbf{a}_{k+1}^T}{\beta_k + Q \mathbf{a}_{k+1}^T\mathbf{a}_{k+1}}\right]\mathbf{a}_{k+1}  \nonumber \\
&=\sigma^2_{S_k}-(\sigma^2_{S_k})^2Q A_{k+1,\rm{total}}\left[1 - \frac{Q A_{k+1,\rm{total}}}{\beta_k + Q A_{k+1,\rm{total}}}\right] \nonumber \\
&=\sigma^2_{S_k}-\sigma^2_{S_k}Q A_{k+1,\rm{total}}\left[\frac{1}{\beta_k + Q A_{k+1,\rm{total}}}\right] \nonumber \\
&=\frac{1}{\beta_k + Q A_{k+1,\rm{total}}}.
\end{align}
Equivalently, the fidelity can be written as 
\begin{align}\label{F}
F(Q)=\beta_k + Q A_{k+1,\rm{total}} \; .
\end{align}
Therefore, $F(Q)$ is an ``inclined'' staircase function with changing slope $A_{k+1,\rm{total}}$ within each $Q_k\leq Q< Q_{k+1}$ as shown in Fig.~\ref{figr:UpperBound}. The beauty of the work is that we deal with linear segments. Thus, our analysis is easily understandable. 
Please note that Fig.~\ref{figr:UpperBound} is different with the figure presented in \cite{c10}. The slope of inclined staircase function is fixed and equal to $E_0$ in \cite{c10}, which is a special case of our work by letting $A_0=E_0$ and $A_i=0$ for $i=1,2,...$ . We are now ready to propose an upper bound on $E_{min}({\cal F})$.
\begin{figure}
  \centering
    \includegraphics[width=0.5\textwidth]{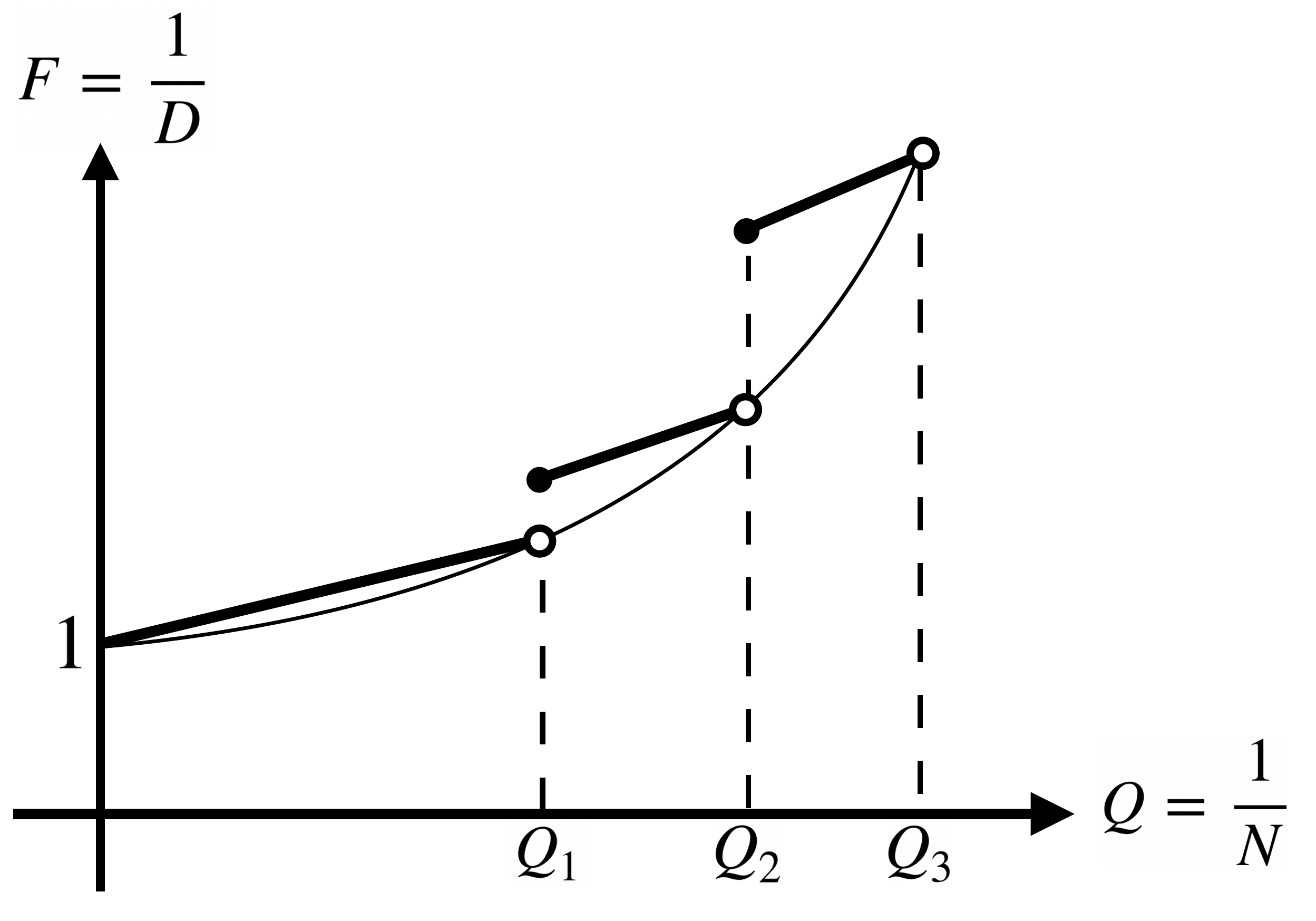}
    \caption{\label{figr:UpperBound}The fidelity-quality tradeoff is always above the profile $F(Q)$, coinciding with it at the jump points $Q_k$.}
\end{figure}

\begin{theorem}
The minimum required energy for profile $F(Q)=1+\alpha Q^2$ is upper-bounded as 
\begin{align}
E_{min}(F)\leq e\ \sqrt[]{\alpha}\nonumber
\end{align}
with $e\approx 2.3203$.
\end{theorem}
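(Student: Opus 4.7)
The plan is to instantiate the $K$-layer scheme above with parameters that make each linear piece of $F(Q)$ just touch the target parabola $1+\alpha Q^2$ at its right endpoint, compute the resulting total energy as a function of a single free parameter, and optimize it as $K\to\infty$.

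For the first step, I would fix a geometric sequence of quality break-points $Q_k=\rho^k/\sqrt{\alpha}$ for $k\ge 1$ (with $Q_0=0$) indexed by a free ratio $\rho>1$; then pick the successive-refinement variances so that $\beta_k=1+\alpha Q_k^2$, and choose the cumulative analog energies $A_{k+1,\mathrm{total}}=\alpha(Q_{k+1}^2-Q_k^2)/Q_{k+1}$. Under this assignment, the linear piece $\beta_k+QA_{k+1,\mathrm{total}}$ meets $1+\alpha Q^2$ exactly at $Q=Q_{k+1}$; a short check shows that $F(Q)-(1+\alpha Q^2)$, a downward quadratic in $Q$ with zeros at $-Q_k^2/Q_{k+1}$ and $Q_{k+1}$, is non-negative throughout $[Q_k,Q_{k+1}]$, so the profile is satisfied. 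Monotonicity $\beta_k\ge\beta_{k-1}\ge 1$, non-negativity of the increments $A_k=A_{k+1,\mathrm{total}}-A_{k,\mathrm{total}}$, and non-negativity of each $B_k$ all follow from $\rho>1$.

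Next I would read off the two energy components. Plugging the assignment into the digital-rate formula (\ref{Rdet}) yields $B_k=(1/Q_k)\log[(1+2\alpha Q_k^2-\alpha Q_{k-1}^2)/(1+\alpha Q_k^2)]$, while the uncoded energy is $\sum_k A_k\sigma_{S_k}^2=\sum_k A_k/(1+\alpha Q_k^2)$. Under the geometric choice, a direct calculation gives $A_0\sigma_{S_0}^2=\sqrt{\alpha}\,\rho$ and $A_k\sigma_{S_k}^2=\sqrt{\alpha}(\rho-1)(\rho^2-1)\rho^{k-2}/(1+\rho^{2k})$ for $k\ge 1$, each decaying geometrically in $k$. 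Summing everything and pulling out the common $\sqrt{\alpha}$ expresses the total energy in the form $E=e(\rho)\sqrt{\alpha}$ for an explicit function $e(\rho)$.

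Finally, I would let $K\to\infty$---justified because both the analog and digital tails decay like $\rho^{-k}$---and minimize $e(\rho)$ over $\rho>1$ numerically; the optimum $e^\star\approx 2.3203$ delivers the claimed bound. The main obstacle is the digital-rate series $\sum_{k\ge 1}\rho^{-k}\log[(1+2\rho^{2k}-\rho^{2k-2})/(1+\rho^{2k})]$, which does not admit a simple closed form, so the minimization must be done numerically and paired with tail estimates (the logarithm tends to $\log 2$ as $k\to\infty$, giving a clean large-$k$ comparison against $\log 2\sum\rho^{-k}$) to justify the minimizer.
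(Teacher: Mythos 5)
Your construction is a legitimate instance of the paper's $K$-layer scheme (your assignment $\beta_k=1+\alpha Q_k^2$ together with $A_{k+1,\mathrm{total}}=\alpha(Q_{k+1}^2-Q_k^2)/Q_{k+1}$ is exactly the profile-touching condition $\beta_{k-1}+Q_kA_{k,\mathrm{total}}=1+\alpha Q_k^2$, and your piecewise check, monotonicity claims, and the formulas for $B_k$ and the analog energies are all correct). The gap is in the final numerical claim: the one-parameter family you propose cannot reach $2.3203$. Its first analog layer alone costs $A_0\sigma_{S_0}^2=\alpha Q_1=\rho\sqrt{\alpha}>\sqrt{\alpha}$, and in the dense-grid limit $\rho\to1^+$ the digital series becomes
\begin{align}
\sum_k \frac{\beta_k-\beta_{k-1}}{Q_k(1+\alpha Q_k^2)}\;\longrightarrow\;\int_{1/\sqrt{\alpha}}^{\infty}\frac{2\alpha\,dQ}{1+\alpha Q^2}=\frac{\pi}{2}\sqrt{\alpha},\nonumber
\end{align}
while the analog layers $k\ge1$ vanish. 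Hence $e(\rho)\to 1+\pi/2\approx 2.5708$ as $\rho\to1^+$, and spot checks (e.g., $e(1.1)\approx2.69$, $e(1.5)\approx2.74$) show $e(\rho)$ increases away from that limit, so $\inf_{\rho>1}e(\rho)=1+\pi/2$, not $2.3203$. Even adding an offset $Q_k=\gamma\rho^k/\sqrt{\alpha}$ does not help: minimizing $\gamma+\pi-2\arctan\gamma$ again gives $1+\pi/2$ at $\gamma=1$. The obstruction is your choice $\beta_k=1+\alpha Q_k^2$, which forces the digital layers to deliver the entire target fidelity at every grid point and relegates the analog layers to bridging between grid points.

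The paper avoids this by keeping two genuinely independent degrees of freedom: an arithmetic grid $Q_k=k\Delta$ with $\Delta=c/\sqrt{\alpha}$ (so the first break point, and hence $A_0=\alpha\Delta$, can be made arbitrarily small) and a separate geometric decay $A_k=d^k\alpha\Delta$ on the analog increments, with $\beta_{k-1}=1+\alpha Q_k^2-A_{k,\mathrm{total}}Q_k<1+\alpha Q_k^2$ determined afterwards. This lets the accumulated analog energy $A_{k,\mathrm{total}}$ carry part of the fidelity at the grid points and shrink the digital rates; the numerical optimum sits at $c^*=0.00137$, $d^*=0.999$ and yields $2.3203$. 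As written, your argument establishes only $E_{\min}(\mathcal{F})\leq(1+\pi/2)\sqrt{\alpha}$ (already better than the prior bound $3.1846\sqrt{\alpha}$, but weaker than the theorem). A minor additional slip: the logarithm in your digital tail tends to $\log(2-\rho^{-2})$, not $\log 2$.
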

\begin{proof}
We will use the scheme described above such that for any $0=Q_0<Q_1<Q_2<...,$ the energy $A_k$ and the source coding parameters $1=\beta_0<\beta_1<\beta_2<...$ will be chosen such that the fidelity-quality tradeoff in (\ref{F}) is always above the profile $F(Q)$, coinciding with it at the jump points $Q_k$, as shown in Fig.~\ref{figr:UpperBound}. In other words,
\begin{align}
A_{k,\rm{total}}Q_k+\beta_{k-1}=1+\alpha Q^2_k
\end{align}
for all $k=1,2,...$ .

Thus, we obtain
\begin{align} \label{betak}
\beta_{k-1}&=1+\alpha Q^2_k-A_{k,\rm{total}} Q_k\nonumber\\
&=1+\alpha Q^2_k-(A_0+A_1+...+A_{k-1})Q_k.
\end{align}
The requirement that $\beta_{k}$ is increasing in $k$ leads to the following constraint
\begin{align} \label{Ak_constraint}
A_0&=\alpha Q_1 \nonumber\\
A_k&<\frac{\alpha (Q^2_{k+1}-Q^2_k)- A_{k,\rm{total}}(Q_{k+1}-Q_k)}{Q_{k+1}},
\end{align}
for all $ k\geq 1$.
\begin{lemma}
\label{lemma2}
For fixed $\alpha$, the choice $Q_k=k\Delta$, $A_0=\alpha\Delta$ and $A_k=d^k\alpha\Delta$ for $k\geq 1$ satisfies (\ref{Ak_constraint}) for any $0<d<1$ and $\Delta >0$.
\end{lemma}
\begin{proof}
Substituting $A_k$ and $Q_k$ in (\ref{Ak_constraint}) yields:
\begin{align} \label{dk}
d^k<\frac{(2k+1)-(\frac{d^k-1}{d-1})}{k+1}.
\end{align}
We use induction to prove (\ref{dk}). For $k=1$, (\ref{dk}) reduces to $d<1$ which is true. Substituting $k=l$, we get to the following:
\begin{align}\label{dn}
d^l(l+1)<(2l+1)-(d^{l-1}+...+d+1)
\end{align}
We assume (\ref{dn}) is true. Now we substitute $k=l+1$ in (\ref{dk}) and have the following:
\begin{align}\label{dn+1}
d^{l+1}(l+2)<(2l+3)-(d^l+d^{l-1}+...+d+1).
\end{align}
In order to complete  the proof, we show (\ref{dn+1}) is true as follows.
First, we multiply both sides of (\ref{dn}) with $d$ and then add $d^{l+1}$ to both sides, yielding
\begin{align}\label{dnn}
d^{l+1}(l+2)<(2l+1)d+d^{l+1}-(d^l+...+d).
\end{align}
Now, it suffices to show the right hand side of (\ref{dnn}) is less than or equal to the right hand side of (\ref{dn+1}), which is the same as
\begin{align}\label{dinequal}
d^{l+1}+(2l+1)d\leq 2l+2.
\end{align}
Since $0<d<1$, we have $(2l+1)d<(2l+1)$ and $d^{l+1}<1$. Thus, (\ref{dinequal}) is valid and the proof of \textit{Lemma 2} is complete.
\end{proof}
By substituting $A_k$ and $Q_k$ in (\ref{betak}), we get
\begin{align}
\beta_{k-1}&=1+\alpha k^2\Delta^2-k\alpha\Delta^2\bigg(\frac{1-d^k}{1-d}\bigg).\nonumber
\end{align}
Letting $K\to\infty$, the total uncoded energy becomes
{\footnotesize\begin{align}
E_{unc}&=\sum_{k=0}^{\infty}\frac{A_k}{\beta_k}\nonumber\\
&=\sum_{k=0}^{\infty}\frac{d^k\alpha\Delta}{1+\alpha (k+1)^2\Delta^2-(k+1)\alpha\Delta^2\bigg(\frac{1-d^{k+1}}{1-d}\bigg)}.
\end{align}}
\begin{center}
\begin{table*}[t]
\label{table2}
\centering
\caption { \small  Comparison between our lower and upper bounds with the bounds in \cite{c10}}
\begin{tabular}{|l||*{5}{c|}}\hline
\backslashbox[48mm]{Minimum Energy values}{$\alpha$}
&\makebox[3em]{1}&\makebox[3em]{10}&\makebox[3em]{100}
&\makebox[3em]{1000}&\makebox[3em]{10000}\\\hline\hline
Lower bound of \cite{c10} & 0.4507 & 1.4252 & 4.5070 & 14.2524 & 45.0700\\\hline
Our lower bound & 0.9057 & 2.8641 & 9.0570 & 28.6407 & 90.5700\\\hline
Upper bound of \cite{c10} & 3.1846 & 10.0706 & 31.8460 & 100.7059 & 318.4600\\\hline
Our upper bound & 2.3203 & 7.3374 & 23.2030 & 73.3743 & 232.0300\\\hline
The lower bound improvement & 0.4550 & 1.4388 & 4.5500 & 14.3884 & 45.5000\\\hline
The upper bound improvement & 0.8643 & 2.7332 & 8.6430 & 27.3316 & 86.4300\\\hline
\end{tabular}
\end{table*}
\end{center}

On the other hand, the total digital energy is 
{\scriptsize \begin{align}
E_{dig}&=\sum_{k=1}^{\infty}B_k\nonumber\\
&=\sum_{k=1}^{\infty}\frac{1}{Q_k}\log\Bigg(1+\frac{\beta_{k}-\beta_{k-1}}{1+\alpha Q^2_{k}}\Bigg)\nonumber\\
&=\sum_{k=1}^{\infty}\frac{1}{k\Delta}\log\Bigg(1+\frac{\alpha\Delta^2\bigg(2k+1-kd^k-(\frac{1-d^{k+1}}{1-d})\bigg)}{1+\alpha k^2\Delta^2}\Bigg).
\end{align}}

Denoting $\Delta=\frac{c}{\sqrt \alpha}$, we then get
{\footnotesize \begin{align}
\label{etot}
&E_{total}=E_{unc}+E_{dig}\nonumber\\
&\leq\sqrt{\alpha}\sum_{k=0}^{\infty}\frac{cd^k}{1+c^2 (k+1)^2-(k+1)c^2\bigg(\frac{1-d^{k+1}}{1-d}\bigg)}\nonumber\\
&+\sqrt{\alpha}\sum_{k=1}^{\infty}\frac{1}{kc}\log\Bigg(1+\frac{c^2\bigg(2k+1-kd^k-(\frac{1-d^{k+1}}{1-d})\bigg)}{1+k^2c^2}\Bigg). 
\end{align}}

In order to minimize the upper bound on total energy, we solve (\ref{etot}) numerically for different values of $c$ and $d$. For optimal values $d^*=0.999$ and $c^*=0.00137$, the upper bound yields
\begin{align}
E_{total}\leq 2.3203\sqrt{\alpha}.\nonumber
\end{align}
\end{proof}

Comparing \textit{Theorem 2} with \textit{Theorem 4}, we notice that the upper bound is improved significantly. 
Note that by setting $d=0$ and $A_0=E_0$ in our work, the method in \cite{c10} is achieved exactly. This is expected, as \cite{c10} is a special case of our work. 

We compare our lower and upper bounds with the bounds in \cite{c10} for some values of $\alpha$ and show our improvements in TABLE 2.
\vspace{2mm}
\section{Conclusions and Future Work}
Minimum energy required to achieve a distortion-noise profile, i.e., a function indicating the maximum allowed distortion value for each noise level, is studied for robust transmission of Gaussian sources over Gaussian channels. 
In order to analyze the minimum energy behavior for the square-law distortion noise profile, the lower and upper bounds were proposed by our coding scheme. We improved both upper and lower bounds significantly.
For future, we are interested to study the distortion-noise profile problem in Multiple Access Channels (MAC). In MAC, instead of having one distortion function, we deal with at least two distortion functions and distortion regions. 


\end{document}